\newtheorem{theorem}{Theorem}
\newtheorem{lemma}{Lemma}
\newtheorem{example}{Example}
\theoremstyle{definition}
\newtheorem{definition}{Definition}
\titleformat{\section}{\normalsize\bfseries}{\thesection}{0.5em}{}
\begin{document}

\title{A predicable condition for boundary layer separation of 2-D incompressible  fluid flows}

\author{ \small Hong Luo$^a$ \thanks{Corresponding author: lhscnu@hotmail.com} \quad Quan Wang$^b$ \quad Tian Ma$^b$ \\
{\small a. College of Mathematics and Software Science, Sichuan Normal University}\\
 {\small Chengdu, Sichuan 610066, China }\\
{\small b. Department of Mathematics, Sichuan University}\\
{\small Chengdu,
Sichuan 610021, China }\\
}
\maketitle
 {\bf Abstract}:
    In this paper, the solutions of Navier-Stokes equations with Dirichlet boundary conditions governing 2-D incompressible fluid flows are considered. A condition for boundary layer separation, which is determined by initial values and external forces, is obtained. More importantly, the condition can predict directly when and where boundary layer separation will occur. The main technical tool is geometric theory of incompressible flows developed by T. Ma and S.Wang in \cite{Ma6}

 {\bf Key Words:} Boundary layer separation; 2-D incompressible fluid flows; Navier-Stokes Equations

{\bf MR(2010)Subject Classification:}  35Q30, 35Q35, 76D10, 76M

\section{Introduction}
Boundary layer is the thin flows closed to the surface of the object. The concept is proposed by Prandtl in 1904. Since then, studying boundary layer becomes an important topic among mechanics, and there are many results\cite{W. E},\cite{Liu},\cite{Oleinik1},\cite{Oleinik},\cite{Schlichting}. Prandtl pointed out that boundary layer can be described by Navier-Stokes equations in \cite{Prandtl}.

 Boundary layer separation is the phenomenon that the original flows closed to the surface of object go away.
It is a very common phenomenon in geophysical dynamics, such as vortex of gulf stream, separation of atmospheric circulation near mountain and the formation of a tornado. There are may researches \cite{Blanchonette},\cite{Chorin},\cite{Ghil},\cite{Larin},\cite{Larin1},\cite{Ma5},\cite{Ma2},\cite{Ma4},\cite{Ma6},\cite{Ma8},\cite{Smith} on boundary layer separation over the past one hundred years. However, there is no mathematical theory which can show when and where boundary layer separation will occur as Chorin and Marsden pointed out in their book \cite{Chorin}.

 Our main objective is to get a condition for boundary layer separation, and the condition is determined by initial values and external forces. That is, we hope that the condition is observable. We know that boundary layer separation is essentially the structural bifurcation of the velocity field of fluid flows from\cite{Ma7},\cite{Ma5},\cite{Ma4},\cite{Ma6}. Hence, we should look for the desirable predictable condition by combining Euler dynamics and Lagrange dynamics.

       In this paper, we study boundary layer separation governed by the following Navier-Stokes equation:
 \begin{eqnarray}
\label{eql1} \left\{
   \begin{array}{ll}

u_t+ (u \cdot \nabla)u=\mu \triangle u-\frac{1}{\rho} \nabla p+f(x),
& \\
div u=0,
 \\
 u|_{\partial \Omega}=0,
& \\
 u(x,0)=\tilde{\varphi}(x),  & \\
\end{array}
\right.
\end{eqnarray}
where $\tilde{\varphi}(x)|_{\partial \Omega}=0$, $\Omega$ is bounded and open of $R^2$ with boundary $\partial \Omega$,
$\tilde{\varphi}(x)=(\tilde{\varphi}_1,\tilde{\varphi}_2)\in C^3(\Omega;R^2)$ is initial value, $f(x)=(f_1(x),f_2(x))\in C^1(\Omega;R^2)$ is
external force.

Our main result is based on analyzing the solutions of (\ref{eql1}) and the lemma of boundary layer separation developed by T. Ma, S.Wang and M.Ghil in \cite{Ma7},\cite{Ma5},\cite{Ma4},\cite{Ma6}. The result in this paper can tell us when and where boundary layer separation will occur. We can predict when a vortex will appear in the fluid flows.

 The paper is organized as follows. In section 2, we introduce preliminaries containing the concept of boundary layer separation, boundary singularity, the lemma of boundary layer separation and our main theorem. The physical interpretation related to
Theorem 1 will be given in section 3. In section 4, we will give some examples of physical application of Theorem 1.

\section{Preliminaries and the predicable condition}

Let $\Omega$ be a bounded and open domain of $R^2$. $\partial\Omega$ is $C^{r+1}$. $C^r(\Omega)$ is the
space of all $C^r$ fields on $\Omega$.
$$
B_0^r(\Omega)=\{u \in C^r(\Omega)| div u=0, u|_{\partial\Omega}=0\}.
$$
$n$ and $\tau$ is the unit normal and tangent vector of $\partial\Omega$, respectively.

We start with some basic concepts.
\begin{figure}[h]
 \begin{center}
 \includegraphics[width=9cm]{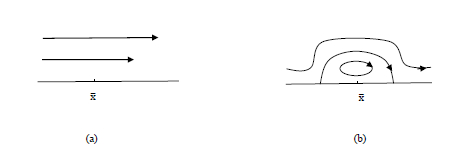} \\
 \end{center}
   \caption{}
 \end{figure}

\begin{definition}
\cite{Ma7},\cite{Ma5},\cite{Ma4},\cite{Ma6} Suppose $u\in B^r_0(\Omega)(r\geq 2)$. $\bar{x}\in \partial\Omega$ is called a boundary singularity of $u$, if $\frac{\partial u_\tau(\bar{x})}{\partial n}=0$.
\end{definition}
\begin{definition}
 \cite{Ma7},\cite{Ma5},\cite{Ma4},\cite{Ma6} We call that boundary layer separation governed by a 2-D vector field $u\in C^1([0,T];B_0^2(\Omega))$ occurs at $t_0$, if $u(x,t)$ is topologically equivalent to the structure of figure 1(a) for any $t<t_0$, but $u(x,t)$ is topologically equivalent to the structure of figure 1(b) for $t>t_0$. That is, if $t<t_0$, $u(x,t)$ is topologically equivalent to
a parallel flow,and if $t>t_0$, $u(x,t)$ separates a vortex. Furthermore, we call that boundary layer separation occurs at $\bar{x}\in \partial\Omega$, if $\bar{x}$ is an isolated boundary singularity at time $t=t_0$.
 \end{definition}
The following lemma proved by T. Ma, S. Wang and M. Ghil is necessary for our result.
\begin{lemma}
\cite{Ma7},\cite{Ma5},\cite{Ma4},\cite{Ma6} Let $u\in C^1([0,T];B_0^2(\Omega))$ be 2-D vector field and $\bar{x} \in \partial\Omega$. Boundary layer separation represented by $u$ occurs at $(\bar{x}, t_0)$,
 if there exists
$0<t_0<T$ such that
\begin{equation}\label{equ7}
\frac{\partial u_\tau}{\partial n}(x, t)\neq 0, \ \ \ t<t_0, x \in \partial\Omega
\end{equation}
 \begin{equation}\label{equ8}
\frac{\partial u_\tau}{\partial n}(\bar{x}, t)=0, t=t_0,
 \end{equation}
where $\bar{x}$ is an isolated boundary singularity of $u(\cdot, t_0)$ on $\partial\Omega$.
\end{lemma}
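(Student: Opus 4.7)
The plan is to combine a local Taylor expansion of $u$ near $\partial\Omega$ with the structural stability/bifurcation theory of 2-D divergence-free vector fields to deduce the two topological pictures in Figure 1.

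First I would set up boundary-fitted coordinates $(s,n)$ in a tubular neighborhood of $\partial\Omega$, with $s$ arc length along the boundary and $n$ signed distance into $\Omega$. Since $u|_{\partial\Omega}=0$ implies both $u_\tau=0$ and $u_n=0$ on $\{n=0\}$, and since $\operatorname{div} u=0$ forces $\partial_n u_n(s,0,t)=-\partial_s u_\tau(s,0,t)=0$, the Taylor expansion reduces to
\begin{equation*}
u_\tau(s,n,t)=a(s,t)\,n+O(n^2),\qquad u_n(s,n,t)=b(s,t)\,n^2+O(n^3),
\end{equation*}
where $a(s,t)=\dfrac{\partial u_\tau}{\partial n}(s,0,t)$. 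Thus within a narrow boundary strip the qualitative picture of $u$ is governed entirely by the scalar function $a$.

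Next, for $t<t_0$ the hypothesis $a(s,t)\neq 0$ on $\partial\Omega$ together with the continuity of $a$ and the connectedness of (each component of) $\partial\Omega$ forces $a(\cdot,t)$ to have constant sign. Then $u_\tau$ has a fixed sign throughout the boundary strip and $u$ has no interior zeros there, so the integral curves of $u$ flow monotonically along $\partial\Omega$; a standard flow-box argument yields topological equivalence to a parallel flow, i.e.\ Figure 1(a). For $t>t_0$, time-continuity of $u$ in $C^1$ together with the condition that $\bar x$ is an \emph{isolated} zero of $a(\cdot,t_0)$ means that the zero of $a$ born at $(\bar x,t_0)$ must generically split so that $a(\cdot,t)$ changes sign across a neighborhood of $\bar x$. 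Combined with the expansion above, this forces the tangential velocity to be of opposite signs on the two sides of $\bar x$ in the boundary strip, producing an interior stagnation point off $\partial\Omega$; invoking the Ma--Wang classification of structurally stable divergence-free plane fields with no-slip boundary, the only compatible local topology is a separated vortex attached near $\bar x$, i.e.\ Figure 1(b).

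The main obstacle, as I see it, is the last step: passing from the algebraic sign change of $a$ to the genuine topological equivalence with Figure 1(b). One must rule out non-generic alternatives (the zero of $a$ remaining a same-sign touch, additional saddles forming, limit cycles, or non-isolated recurrence), and this is precisely where the structural stability and bifurcation theory of \cite{Ma6} does the heavy lifting—through index computations at interior critical points and a Poincar\'e--Bendixson-type analysis for divergence-free fields under Dirichlet boundary conditions. I would treat the earlier steps as fairly direct consequences of the Taylor expansion, and concentrate the rigorous work on importing the appropriate local normal-form and structural-stability theorems from the Ma--Wang geometric theory of incompressible flows.
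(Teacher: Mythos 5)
The paper contains no proof of this lemma at all: it is imported verbatim, with citations, from the Ghil--Ma--Wang structural-bifurcation theory (\cite{Ma7},\cite{Ma5},\cite{Ma4},\cite{Ma6}), so the only comparison to be made is between your sketch and that cited theory. The first half of your sketch is consistent with it: the boundary expansion $u_\tau=a(s,t)\,n+O(n^2)$, $u_n=b(s,t)\,n^2+O(n^3)$ forced by the no-slip and divergence-free conditions is exactly the expansion the paper itself uses in (\ref{equ3})--(\ref{equ4}), and the statement that a nowhere-vanishing $\partial u_\tau/\partial n$ on the boundary gives local topological equivalence to a parallel shear flow is indeed one of the Ma--Wang structural-stability results.

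The genuine gap is in the second half. The hypotheses (\ref{equ7})--(\ref{equ8}) say nothing about $a(\cdot,t)$ for $t>t_0$; your claim that the isolated zero at $(\bar x,t_0)$ ``must generically split so that $a(\cdot,t)$ changes sign'' is not a deduction from the hypotheses but an appeal to genericity, while the lemma asserts its conclusion under the stated hypotheses, not generically. A zero of $\partial u_\tau/\partial n$ that appears at $t_0$ and immediately disappears without a sign change produces no separation, which is precisely why the original Ghil--Ma--Wang theorems impose additional nondegeneracy conditions (on the order of the zero in $x$ and $t$, or index-type conditions) to obtain the bifurcation for $t>t_0$; your sketch silently assumes what those conditions are there to guarantee. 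Moreover, the decisive step---passing from a sign change of $a$ to the vortex topology of Figure 1(b)---is the actual content of the lemma, and you explicitly outsource it to \cite{Ma6}, so the argument amounts to reducing the lemma to itself. Since the paper likewise treats the lemma as a quoted result, the honest assessment is that neither your sketch nor the paper proves it; a real proof would have to reproduce the Ma--Wang analysis (orbit structure near degenerate boundary zeros, index arguments for divergence-free fields, and the structural-stability theorem under Dirichlet conditions) rather than cite it, and would have to state the nondegeneracy hypotheses explicitly.
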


 Choose one party $\Gamma \subset\partial\Omega$. Without loss of generality, we take a coordinate system
$(x_1,x_2)$ with $\bar{x}$ at the origin and $\Gamma$ given by
\begin{eqnarray*}
\Gamma = \{{(x_1,0)\mid0<|x_1|<\delta}\}
\end{eqnarray*}
 for some $\delta > 0$. Obviously, the tangent and normal vectors on $\Gamma$ are the unit vectors
in the $x_1$- and $x_2$-directions, respectively.

With $\tilde{\varphi}(x)|_{\partial \Omega}=0$ and $div \varphi=0$, we get
 \begin{equation}\label{equ3}
 \tilde{\varphi}_1=x_2 \varphi_{11}(x_1)+x_2^2 \varphi_{12}(x_1)+x_2^3 \varphi_{13}(x_1)+o(x_2^3),
 \end{equation}
 \begin{equation}\label{equ4}
\tilde{\varphi}_2=x_2^2 \varphi_{21}(x_1)+o(x_2^2).
 \end{equation}

 We consider the Taylor expansions of $f(x)=(f_1(x),f_2(x))\in C^1(\Omega;R^2)$ with respect to $x_2$
 \begin{equation}\label{equ5}
 f_1=g_0(x_1)+x_2 g_1(x_1)+o(x_2),
 \end{equation}
 \begin{equation}\label{equ6}
f_2=h_0(x_1)+x_2 h_2(x_1)+o(x_2).
 \end{equation}
 Our main result is as followed.
\begin{theorem}
Suppose $\varphi \in C^3(\Omega,R^2)$ and $f\in C^1(\Omega,R^2)$, if
\begin{equation}\label{6}
0< \min_{\Gamma} \frac{-\varphi_{11}}{\mu \varphi_{11}^{''}+6 \mu\varphi_{13}-2\mu \varphi_{21}^{'}+g_1-h_0^{'}}<<1,
\end{equation}
then there exist $t_0>0$ and $\bar{x}\in \Gamma$ such that boundary layer separation represented by the solution of (\ref{eql1}) occur  at $(t_0,\bar{x})$,
where $\varphi_{11}, \varphi_{13}, \varphi_{21}, g_1$ and $h_0$ as (\ref{equ3})-(\ref{equ6}).
\end{theorem}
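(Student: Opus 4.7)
The plan is to apply Lemma 1 directly. By the hypothesis $\tilde\varphi|_{\partial\Omega}=0$ and the no-slip condition, for any $t$ the tangential/normal decomposition on $\Gamma$ reduces to $u_\tau = u_1$, $\partial u_\tau/\partial n = \partial u_1/\partial x_2$, so the quantity whose zero set I must understand is $F(x_1,t):=\partial u_1/\partial x_2(x_1,0,t)$. The idea is to Taylor-expand $F$ in $t$ around $t=0$, identify the coefficient of $t^1$ with the denominator in (\ref{6}), locate a first-vanishing time, and then invoke Lemma 1.

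First I would compute $F(x_1,0)$ from (\ref{equ3}), giving $F(x_1,0)=\varphi_{11}(x_1)$. Next I would compute $\partial_t F(x_1,0)$ by differentiating the first component of (\ref{eql1}) in $x_2$ and evaluating on $\Gamma$. The convective term $\partial_{x_2}[(u\cdot\nabla)u_1]$ vanishes on $\partial\Omega$ at $t=0$: the no-slip condition kills $u_1,u_2$ and forces $u_{1,x_1}=0$ on $\partial\Omega$, while $\mathrm{div}\,u=0$ then gives $u_{2,x_2}=0$ on $\partial\Omega$. The Laplacian term gives $\mu(\varphi_{11}''+6\varphi_{13})$ from (\ref{equ3}), and the force contributes $g_1$ by (\ref{equ5}). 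The remaining piece $-\rho^{-1}p_{x_1 x_2}$ is obtained by differentiating the second momentum equation in $x_1$ and evaluating at $(x_1,0,0)$: the convective and time-derivative terms again vanish (since $u_{2,t}\equiv 0$ on $\partial\Omega$ and all relevant first derivatives of $u$ vanish on $\partial\Omega$), so
\begin{equation*}
\tfrac{1}{\rho}p_{x_1 x_2}\big|_{x_2=0,t=0}=\mu\,\Delta u_{2,x_1}\big|_{x_2=0,t=0}+h_0'(x_1)=2\mu\,\varphi_{21}'(x_1)+h_0'(x_1),
\end{equation*}
using (\ref{equ4}) and (\ref{equ6}). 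Assembling the terms yields exactly
\begin{equation*}
\partial_t F(x_1,0)=\mu\varphi_{11}''(x_1)+6\mu\varphi_{13}(x_1)-2\mu\varphi_{21}'(x_1)+g_1(x_1)-h_0'(x_1),
\end{equation*}
which is the denominator $D(x_1)$ in (\ref{6}).

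With this in hand, the first-order expansion reads $F(x_1,t)=\varphi_{11}(x_1)+t\,D(x_1)+o(t)$. Hypothesis (\ref{6}) says the function $\tau(x_1):=-\varphi_{11}(x_1)/D(x_1)$ is strictly positive and has a positive but very small minimum over $\Gamma$; let $t_0:=\min_\Gamma\tau$ and let $\bar x\in\Gamma$ be a minimizer. Because the minimum is strict at $\bar x$ (after, if necessary, shrinking $\Gamma$) and $\tau$ is continuous on $\Gamma$, it follows that $F(x_1,t)\ne 0$ for every $x_1\in\Gamma$ and every $t<t_0$, and $F(\bar x,t_0)=0$ with $\bar x$ an isolated zero. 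This is precisely the hypothesis of Lemma 1, so boundary layer separation of the solution to (\ref{eql1}) occurs at $(\bar x,t_0)$.

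The main technical obstacle is converting the formal Taylor expansion into a rigorous statement over an interval $[0,t_0]$ of positive length; this is where the ``$\ll 1$'' in (\ref{6}) is used. Concretely, I would need a uniform $C^2$-bound on the smooth Navier--Stokes solution on a small time interval (standard for the 2-D problem with the assumed regularity of $\tilde\varphi$ and $f$) to justify $F(x_1,t)=\varphi_{11}(x_1)+tD(x_1)+R(x_1,t)$ with $|R(x_1,t)|\le Ct^2$ on $\Gamma$, and then to choose the size of the ratio in (\ref{6}) small enough (relative to $C$, $\min_\Gamma|D|$, and the modulus of continuity of $\tau$) that the linear term controls $R$; this is what guarantees both the non-vanishing of $F$ for $t<t_0$ and the isolation of the root at $\bar x$.
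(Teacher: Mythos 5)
Your proposal follows essentially the same route as the paper: expand $\partial u_\tau/\partial n = \partial u_1/\partial x_2$ on $\Gamma$ to first order in $t$, use the two momentum equations (with the second one eliminating the pressure cross-term $p_{x_1x_2}$) to identify the coefficient $\mu\varphi_{11}''+6\mu\varphi_{13}-2\mu\varphi_{21}'+g_1-h_0'$, set $t_0=\min_\Gamma(-\varphi_{11}/D)$ with minimizer $\bar x$, and invoke the Ma--Wang--Ghil lemma; the paper does the same computation via the ansatz $u=\tilde\varphi+t\omega+o(t^2)$ and matching powers of $x_2$. Your closing remarks on controlling the remainder and ensuring the minimizer is a strict, isolated zero actually address points the paper itself treats only loosely, so the proposal is correct and in line with the paper's argument.
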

\begin{proof}

$u$ has the Taylor expansion at $t=0$,
\begin{equation}\label{equ10}
u=\tilde{\varphi}+t\omega+o(t^2),
\end{equation}
where $\omega=(\omega_1,\omega_2)$ satisfies
\begin{equation}\label{equ11}
\omega|_{\partial \Omega}=0, \ \ \ divw=0.
\end{equation}

Thus,
\begin{equation}\label{equ12}
\omega=(x_2\omega_1+o(x_2), x^2_2\omega_2+o(x_2^2)).
\end{equation}

Let $p=p_0+tp_1+o(t)$ and $p_0=p_{01}(x_1)+x_2p_{02}(x_1)+o(x_2)$.

Substituting $u$ in (\ref{eql1}), we get
\begin{equation*}
x_2 \omega_1=2\mu \varphi_{12}+\mu x_2 \big(\varphi_{11}^{''}+6 \varphi_{13}\big)
\end{equation*}
\begin{equation}\label{equ13}
-\frac{1}{\rho}(p_{01}^{'}+x_2p_{02}^{'})+g_0+x_2 g_1+o(x_2),
\end{equation}
\begin{equation}\label{equ13}
0=2\mu\varphi_{21}-\frac{1}{\rho}p_{02}+h_0+o(1)
\end{equation}

Compare the coefficient on both sides to get
\begin{equation}\label{equ14}
p_{02}=\rho (h_0+2\mu \varphi_{21}).
\end{equation}
\begin{equation}\label{equ15}
\omega_1=\mu \big(\varphi_{11}^{''}+6 \varphi_{13}\big)-\frac{1}{\rho}\frac{d p_{02}}{d x_1}++g_1.
\end{equation}

Combine (\ref{equ14}) and (\ref{equ15}) to get
\begin{equation}\label{equ15}
\omega_1=\mu \big(\varphi_{11}^{''}+6 \varphi_{13}\big)-h_0^{'}-2\mu \varphi_{21}^{'}+g_1.
\end{equation}

With (\ref{equ10}) and (\ref{equ12}), we know
$$
\frac{\partial u_\tau}{\partial n}|_{\Gamma}=\frac{\partial u_1}{\partial x_2}|_{\Gamma}
$$
$$
=\frac{\partial \tilde{\varphi}_1}{\partial x_2}+t \omega_1+o(t)|_{\Gamma}
$$
$$
=\varphi_{11}+t (\mu \varphi_{11}^{''}+6 \mu\varphi_{13}-h_0^{'}-2\mu \varphi_{21}^{'}+g_1)+o(t)|_{\Gamma}
$$

Clearly, if $t_0=\min_{x_1 \in \Gamma}\frac{-\varphi_{11}}{\mu \varphi_{11}^{''}+6 \mu\varphi_{13}-2\mu \varphi_{21}^{'}+g_1-h_0^{'}}<<1$, we obtain
$$
\frac{\partial u_\tau}{\partial n}|_{\Gamma}=0.
$$

Then, there exists one point $\bar{x} \in \Gamma$ which makes
$$
\frac{-\varphi_{11}}{\mu \varphi_{11}^{''}+6 \mu\varphi_{13}-2\mu \varphi_{21}^{'}+g_1-h_0^{'}}
$$
obtain minmum.

Hence, $\bar{x}$ is an isolated singularity of $u(\cdot, t_0)$ on $\partial\Omega$ , and
$$
\frac{\partial u_\tau(\bar{x}, t_0)}{\partial n}|_{\Gamma}=0.
$$

That is, there exists $\bar{x}$ which satisfies (\ref{equ8}).

Clearly,
$$
\frac{\partial u_\tau}{\partial n}|_{\Gamma}=\frac{\partial u_1}{\partial x_2}|_{\Gamma}
$$
$$
=\frac{\partial \tilde{\varphi}_1}{\partial x_2}+t \omega_1+o(t)|_{\Gamma}
$$
$$
=\varphi_{11}+t (\mu \varphi_{11}^{''}+6 \mu\varphi_{13}-h_0^{'}-2\mu \varphi_{21}^{'}+g_1)+o(t)|_{\Gamma}\neq 0,
$$
if $t<t_0$.

That is, there exists $t<t_0$ which satisfies (\ref{equ7}). The proof is complete.
\end{proof}

\section{Physical interpretation}

 In the following, we give the physical interpretation related to
Theorem 1.

 1. In this paper, we analyze the solutions of the Navier-Stokes equations governing incompressible boundary layer flows and get the condition (\ref{6}) for boundary layer separation. $\varphi_{11}, \varphi_{21}, g_1$, $\varphi_{13}$  and $h_0$ in the condition (\ref{6}) are observations. $t_0=\min_{\Gamma} \frac{-\varphi_{11}}{\mu \varphi_{11}^{''}+6\mu \varphi_{13}-2\mu \varphi_{21}^{'}+g_1-h_0^{'}}$ is the time at which boundary layer separation occurs. It is $\bar{x}$ that  makes $\frac{-\varphi_{11}}{\mu \varphi_{11}^{''}+6\mu \varphi_{13}-2\mu \varphi_{21}^{'}+g_1-h_0^{'}}$ obtain minimum is the position in which boundary layer separation occurs.

2. The time $t_0$ in Theorem 1 is nondimensional. We should converse $t_0$ to the dimensional time  $T_0$ in practical application.

3. If $-\varphi_{11}$ and $\mu \varphi_{11}^{''}+6\mu \varphi_{13}-2\mu \varphi_{21}^{'}+g_1-h_0^{'}$ are linearly dependent,
there is more than one $\bar{x}$ such that make $\frac{-\varphi_{11}}{\mu \varphi_{11}^{''}
+6\mu \varphi_{13}-2\mu \varphi_{21}^{'}+g_1-h_0^{'}}$ obtain minimum.
That is, $\bar{x}$ is not an isolated singularity. However. there is not a possibility that $\mu \varphi_{11}^{''}+6\mu \varphi_{13}-2\mu \varphi_{21}^{'}+g_1-h_0^{'}$ are linearly dependent in natural phenomena. Here, we don't have special emphasis on it.

4. If external force is zero, (\ref{6}) becomes $0< \min_{\Gamma} \frac{-\varphi_{11}}{\mu \varphi_{11}^{''}+6\mu \varphi_{13}-2\mu \varphi_{21}^{'}}<<1$.
Clearly, if the observational velocity $\varphi$ satisfies $0< \min_{\Gamma} \frac{-\varphi_{11}}{\mu \varphi_{11}^{''}+6\mu \varphi_{13}-2\mu \varphi_{21}^{'}}<<1$, then boundary layer separation will occur.

\section{Application}

\begin{example}
  Suppose the airflows near the wing as shown  in figure 2(a). $(\varphi_1,\varphi_2)$ is the velocity of boundary layer at the bottom of the wing at some time. Let $o$ be the origin of coordinates in Figure 2(a). $(\varphi_1,\varphi_2)$ has the Taylor expansion at $o=0$,
\begin{eqnarray}\label{18s}
 \varphi_1=(v_0-2\alpha x_1-\beta x_1^2+o(x_1^2))x_2+o(x_2),
\end{eqnarray}
\begin{eqnarray}
\varphi_2=(\alpha+\beta x_1+o(x_1))x_2^2+o(x_2^2),
\end{eqnarray}
where  $x_1$ is horizontal axes, $x_2$ is vertical axes , $v_0$ is the speed of the wing and $\beta$ is the decay rate of the velocity representing the
frictional force. Physically, $\beta$ is has a relation with the velocity of airplane as $\beta=\gamma v_0^k(k>1)$.

$(\varphi_1(0,x_2),\varphi_2(0,x_2))=(\varphi_1(0,x_2),0)$ means $\alpha=0$. For the boundary layer at the bottom of the wing, we know that
$$\varphi_{11}=v_0-\beta x_1^2+o(x_1^2), \varphi_{11}^{''}=-2\beta, \varphi_{21}^{'}=\beta,$$

$$\varphi_{13}=g_1=h_0^{'}=0.$$

Then $$\frac{-\varphi_{11}}{\mu \varphi_{11}^{''}+6\mu \varphi_{13}-2\mu \varphi_{21}^{'}+g_1-h_0^{'}}=\frac{-v_0+\beta x_1^2+o(x_1^2)}{-4\mu\beta}=\frac{v_0}{4\mu\beta}-\frac{x_1^2}{4\mu}+o(x_1^2).$$

Obviously, we get
 $$0<\frac{v_0}{4\mu\beta}-\frac{x_1^2}{4\mu}+o(x_1^2)<<1.$$
 For any $s<\sqrt{\frac{v_0}{\beta}}=\sqrt{\frac{1}{\gamma v_0^{k-1}}}$ and $x_1\in(0,s]$,
 \begin{eqnarray}\label{17s}
 0<\frac{v_0}{4\mu\beta}=\frac{1}{4\mu\gamma v_0^{k-1}}<<1,
 \end{eqnarray}
for the velocity $v_0$ of airplane is large.

 Hence, we obtain
 \begin{eqnarray*}
 \min_{0<x_1\le s}\frac{-\varphi_{11}}{\mu \varphi_{11}^{''}+6\mu \varphi_{13}-2\mu \varphi_{21}^{'}+g_1-h_0^{'}}
  \end{eqnarray*}
 \begin{eqnarray}\label{cc}
 \approx\min_{0<x_1\le s }\frac{1}{4\mu\gamma v_0^{k-1}}-\frac{x_1^2}{4\mu}
 \end{eqnarray}

$(\ref{cc})$ means that a vortex will appear in the boundary layer at time
\begin{eqnarray}\label{25}
T_s=\frac{1}{4\mu\gamma v_0^{k-1}}-\frac{s^2}{4\mu}.
\end{eqnarray}
The vortex is at $x_1=s$.
$(\ref{25})$ means that $T_{s_1}<T_{s_2}$ for $s_2<s_1$.

\begin{figure}[h]
 \begin{center}
 \includegraphics[width=9cm]{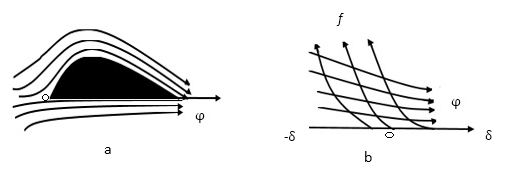} \\
 \end{center}
   \caption{}
 \end{figure}

With $(\ref{cc})$ and $(\ref{25})$, we can get the following conclusions.

1. There will be many vortices near $x_1=0$ in a short time, if the velocity $v_0$ of airplane is very large.

2. The vortices occur in different position near $x_1=0$ at different time.

3. The vortex occur in $x_1=s_1$ more early than in $x_1=s_2$, if $s_1>s_2$.
\end{example}

\begin{example}
Suppose current drift as shown in Figure 2(b) at some time. $f=(f_1,f_2)$ is the strength of sea wind. $\varphi$ is the velocity of the current drift. $(-\delta,\delta)$ is one part of coast.

 Let $o$ be the origin of coordinates in Figure 2(b). Here, we take
  \begin{eqnarray}
 \label{17}f_1=-\frac{1}{2\delta+x_1}\\
 f_2=-\frac{1}{2}x_1^2+2\delta x_1+5\delta^2\\
 \varphi_1=\frac{\theta}{2\delta+x_1}x_2\\
 \varphi_2=\frac{\theta}{2(2\delta+x_1)^2}x_2^2\label{21},
 \end{eqnarray}
where  $x_1$ is horizontal axes, $x_2$ is vertical axes.

Substituting $(\ref{17})-(\ref{21})$ in $(\ref{6})$, we get
\begin{eqnarray*}
0<\frac{-\varphi_{11}}{\mu \varphi_{11}^{''}+6\mu \varphi_{13}-2\mu \varphi_{21}^{'}+g_1-h_0^{'}}
\end{eqnarray*}
\begin{eqnarray*}
=\frac{\theta(2\delta+x_1)^2}{(2\delta+x_1)^3( -x_1+
2\delta)-4\mu\theta}
\end{eqnarray*}
\begin{eqnarray}\label{22}
<\frac{9\theta}{\delta^2}<<1(\delta>>1)
\end{eqnarray}

With $(\ref{22})$, we obtain
\begin{eqnarray*}
\min_{-\delta<x_1<\delta}\frac{\theta(2\delta+x_1)^2}{(2\delta+x_1)^3( -x_1+
2\delta)-4\mu\theta}
\end{eqnarray*}
\begin{eqnarray}\label{23}
\approx\min_{-\delta<x_1<\delta}\frac{\theta}{(2\delta+x_1)( -x_1+
2\delta)}=\frac{\theta}{4\delta^2}
\end{eqnarray}

$(\ref{23})$ means that the current drift as shown in Figure 2(b) will separate a vortex at time $T=\frac{\theta}{4\delta^2}$, and the vortex is at $x_1=0$.
\end{example}

\begin {thebibliography}{90}

\bibitem{Blanchonette} P. Blanchonette and M. A. Page. Boundary-Layer Separation in the Two-Layer Flow Past a Cylinder in a Rotating Frame. Theoret Comput Fluid Dynamics, (1998) 11: 95--108

\bibitem{Chorin} A. Chorin, J. Marsden. A mathematical Introduction to Fluid Mechanics. Springer-Verlag, 1997

\bibitem{W. E} W. E, B. Engquist, Blow up of solutions of the unsteady Prandtl¡¯s equation, Commun Pure Appl Math, (1997)50: 1287--1293

\bibitem{Ghil1} M. Ghil, T. Ma, S. Wang, Structural bifurcation of 2-D incompressible flows, Indiana Univ Math J, (2001)50: 159--180

\bibitem{Ghil} M. Ghil, J. Liu, and C. Wang and S. Wang, Boundary-layer separation and adverse pressure gradient for 2-D viscous incompressible flow, Physica D, (2004) 197: 1--2, 149--173

\bibitem{Ma7} M. Ghil and T. Ma and S. Wang, Structural Bifurcation of 2-D Incompressible Flows with Dirichlet Boundary Conditions: Applications to Boundary-Layer Separation, SIAM J Applied Math, (2005) 65(5): 1576--1596

\bibitem{Larin} O. B. Larin and V. A. Levin. Boundary Layer Separation in a Laminar Supersonic Flow with Energy Supply Source. Technical Physics Letters, (2008) 34(3): 181--183.

\bibitem{Larin1}  O. B. Larin and V. A. Levin. Effect Of Energy Supply To A Gas On Laminar Boundary Layer Separation. J Appl  Mech  and Tech  Phys, (2010) 51(1): 11--15

 \bibitem{Liu}   J. Liu, Z. Xin, Boundary layer behavior in the fluid-dynamic limit for a nonlinear model Boltzmann equation, Arch Rat Mech Anal, (1996)135: 61--105

\bibitem{Ma5} T. Ma and S. Wang, Rigorous Characterization of Boundary Layer Separations, Proc. of the Second MIT Conference on Computational Fluid and Solid Mechanics, Cambridge, MA, 2003

\bibitem{Ma2} T. Ma and S. Wang, Interior Structural Bifurcation and Separation of $2-D$ Incompressible Flows, J Math Phy, (2004)45(5): 1762--1776

\bibitem{Ma3} T. Ma and S. Wang, Asymptotic Structure for Solutions of the Navier-Stokes Equations, Disc Cont Dyna Syst, (2004) 11(1) 189--204

\bibitem{Ma4} T. Ma and S. Wang, Boundary Layer Separation and Structural Bifurcation for 2-D Incompressible Fluid Flows, Disc Cont Dyna Syst, (2004)10(1--2): 459--472

\bibitem{Ma6} T. Ma and S. Wang, Geometric Theory of Incompressible Flows with Applications to Fluid Dynamics, AMS Mathematical Surveys and Monographs Series, vol. 119, 2005, 234 pp

\bibitem{Ma8} T. Ma and S. Wang, Boundary Layer and Interior Separations in the Taylor--Couette--Poiseuille Flow, J Math Phys, (2009)50(3): 1--29

\bibitem{Oleinik1} O. Oleinik, On the mathematical theory of boundary layer for unsteady flow of incompressible fluid, J Appl Math Mech, (1966)30:951--974.

\bibitem{Oleinik} O. Oleinik and V. N. Samokhin, Mathematical Models in Boundary Layer Theory, Chapman and Hall/CRC, Boca Raton, FL, 1999.

\bibitem{Prandtl} L. Prandtl. On the motion of fluids with very little friction, in Verhandlungen des dritten internationalen Mathematiker-Konggresses, Heidelberg, 1904, Leipeizig, 1905, 484--491

\bibitem{Schlichting} H. Schlichting, Boundary Layer Theory, eighth ed., Springer, Berlin-Heidelberg, 2000.

\bibitem{Smith} F.T. Smith and S.N. Brown. Boundary-Layer Separation. Proceedings of the IUTAM Symposium London, 1986, Springer-Verlag, 1987

\end{thebibliography}

\end{document}